\documentclass[12pt,a4paper]{article}
\usepackage[top=30mm, left=25mm, right=25mm, bottom=30mm]{geometry}

\usepackage[colorlinks=true, linkcolor=blue, urlcolor=blue, citecolor=red]{hyperref}

\usepackage{graphicx} 
\usepackage[T1]{fontenc}
\usepackage{imakeidx}
\makeindex[columns=1, title=Alphabetical Index, intoc]
\usepackage{amsmath}
\usepackage{amssymb}
\usepackage{tcolorbox}
\usepackage{verbatim}
\usepackage{xcolor}
\usepackage{listings}
\usepackage[colorlinks = true]{hyperref}
\usepackage[utf8]{inputenc}
\usepackage{amsmath}
\usepackage{amsfonts}
\usepackage{amsthm}
\usepackage{amsmath, amsfonts, amssymb, amsthm}
\usepackage{tikz}

\newtheorem{theorem}{Theorem}[section]

\newtheorem{corollary}[theorem]{Corollary}

\theoremstyle{definition}

\newtcolorbox{mytheorem}[2][]{colback=olive!5!white, colframe=blue!75!black,
fonttitle=\bfseries, title=#2, #1}

\usepackage{mathtools}

\title{Cartesian Prime Graphs and Cospectral Families}

\author{Abhinav Bitragunta,
    Hareshkumar Jadav,
    Ranveer Singh\\
    \small Department of Computer Science and Engineering,\\
    \small Indian Institute of Technology Indore, India
}

\date{}

\begin{document}

\maketitle

\begin{abstract}
We introduce a method for constructing larger families of connected cospectral graphs from two given cospectral families of sizes $p$ and $q$. The resulting family size depends on the Cartesian primality of the input graphs and can be one of $pq$, $p + q - 1$, or $\max(p, q)$, based on the strictness of the applied conditions. Under the strictest condition, our method generates $O(p^3q^3)$ new cospectral triplets, while the more relaxed conditions yield $\varOmega(pq^3 + qp^3)$ such triplets. We also use the existence of specific cospectral families to establish that of larger ones.

\end{abstract}

\section{Introduction}

One of the central questions in algebraic graph theory is whether the spectrum uniquely determines the structure of a graph or not. The existence of cospectral non-isomorphic graphs shows that the answer is often no, which motivates the problem of finding cospectral graphs. The notation $G \cong H$ denotes that $G$ and $H$ are isomorphic, while $G \ncong H$ indicates that they are non-isomorphic. Graphs $G$ and $H$ are said to be \textit{cospectral} with respect to some matrix if they have the same spectra on that matrix representation and $G \ncong H$. In this paper we focus on the adjacency and Laplacian matrix representation of graphs. Cospectral graphs must hence have the same number of vertices.

A widely used method to construct cospectral graphs is \textit{Godsil--McKay switching}~\cite{Godsil}. A variety of other constructions such as Schwenk’s coalescence method for trees, constructions using regular rational orthogonal matrices, and generalized switching techniques have been developed to produce both finite and infinite families of cospectral graphs~\cite{Abiad, Bapat, Haythorpe}. These methods not only deepen our understanding of the limitations of spectral invariants but also reveal rich algebraic and combinatorial structures underlying cospectrality. For further details refer to an interesting survey on this topic, see \cite{Haemers}. In this work, we develop a method to construct cospectral families of various sizes. We leverage Cartesian primality to construct such families.

The \textit{Cartesian product} of $G$ and $H$ is denoted by $G \square H$. We say that $G_1 \square G_2 \square \cdots \square G_k$ is a \textit{Cartesian factorization} of a graph $G$ if and only if $G \cong G_1 \square G_2 \square \cdots \square G_k$, and $G_1, \ldots, G_k$ are called the \textit{Cartesian factors} (or simply, factors) of $G$. A graph $G$ is said to be a \textit{Cartesian prime} if its only nontrivial factor is $G$ itself. If $G_1, \ldots, G_k$ are all prime, then the factorization is called a \textit{Cartesian prime factorization}. A fundamental result proved in \cite{Sabidussi} states that connected graphs have a unique Cartesian prime factorization. The graphs $G$ and $H$ are said to be \textit{coprime} if they share no prime factors.

Some well-known results on the Cartesian product of graphs are stated here \cite{Imrich}. The Cartesian product is commutative up to isomorphism. If $\lambda$ is an eigenvalue of $G$ and $\mu$ an eigenvalue of $H$, then $\lambda + \mu$ is an eigenvalue of $G \square H$, with multiplicities added. For non-trivial graphs $G_1, G_2$ and $H$, we have $G_1 \square H \cong G_2 \square H$ if and only if $G_1 \cong G_2$. The Cartesian Product $G \square H$ is connected if and only if both $G$ and $H$ are connected.

\section{Main Result}

In this section, we present a method to construct cospectral families of size $pq$ given two existing cospectral graph families of sizes $p$ and $q$. Our main result, with specific construction conditions, is detailed in Theorem~\ref{thm:main_const}.

\begin{theorem}
    \label{thm:main_const}
    Let $\mathcal{G} = \{G_1, G_2, \ldots, G_p\}$ and $\mathcal{H} = \{H_1, H_2, \ldots, H_q\}$ be families of connected, mutually cospectral graphs such that the spectra of graphs in $\mathcal{G}$ differ from those in $\mathcal{H}$. Suppose at least one of the following conditions holds:
\begin{enumerate}
    \item Every graph in \(\mathcal{G} \cup \mathcal{H}\) is Cartesian prime.
    \item \(\gcd(|V(G)|, |V(H)|) = 1\) for all $G \in \mathcal{G}$ and $H \in \mathcal{H}$.
    \item \label{cond:general} No Cartesian prime graph $P$ is a common factor of any $G \in \mathcal{G}$ and $H \in \mathcal{H}$ (i.e., the families share no prime factors)
    \end{enumerate}
    Then the family $\mathcal{F} = \{F_{ij} \mid F_{ij} = G_i \square H_j,\ 1 \leq i \leq p,\ 1 \leq j \leq q\}$ forms a connected cospectral family.
\end{theorem}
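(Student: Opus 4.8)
The plan is to establish two properties of $\mathcal{F}$: that every member is connected and that the $pq$ graphs $F_{ij}$ are mutually cospectral and pairwise non-isomorphic, so that $\mathcal{F}$ is a genuine cospectral family of size $pq$. Connectivity is immediate, since each $G_i$ and each $H_j$ is connected and the excerpt records that a Cartesian product of connected graphs is connected. Cospectrality is nearly as quick: using the stated fact that the eigenvalues of $G \square H$ are the sums $\lambda + \mu$ of eigenvalues of the two factors (with multiplicities added), together with the hypotheses that all $G_i$ share one spectrum and all $H_j$ share another, every $F_{ij} = G_i \square H_j$ carries the same multiset of eigenvalues. The same additivity holds for the Laplacian spectrum, so the argument covers both matrices treated in the paper. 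This disposes of everything except non-isomorphism, which is where the three conditions enter and which I expect to be the main obstacle.

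My strategy for the non-isomorphism part is to reduce all three hypotheses to Condition~\ref{cond:general} and then argue once. First I would show Condition $1 \Rightarrow$ Condition~\ref{cond:general} and Condition $2 \Rightarrow$ Condition~\ref{cond:general}. For Condition~1, if some prime $P$ were a common factor of $G_i$ and $H_j$, then since these graphs are themselves prime we would have $G_i \cong P \cong H_j$, contradicting that the spectra of $\mathcal{G}$ and $\mathcal{H}$ differ (isomorphic graphs are cospectral). For Condition~2, a common prime factor $P$ of $G_i$ and $H_j$ would force $|V(P)|$ to divide $\gcd(|V(G_i)|,|V(H_j)|) = 1$, making $P$ trivial. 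Hence in every case the families $\mathcal{G}$ and $\mathcal{H}$ share no prime factor.

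Working under Condition~\ref{cond:general}, I would invoke the uniqueness of Cartesian prime factorization for connected graphs~\cite{Sabidussi}. Suppose $F_{ij} \cong F_{kl}$, and write $S_G, S_H, T_G, T_H$ for the multisets of prime factors of $G_i, H_j, G_k, H_l$, respectively. Uniqueness gives the multiset identity $S_G \uplus S_H = T_G \uplus T_H$. The key step is that Condition~\ref{cond:general} lets me recover the $\mathcal{G}$-part unambiguously: any prime occurring in $S_G$ is a factor of a graph in $\mathcal{G}$, hence not a factor of any graph in $\mathcal{H}$, so it lies in neither $S_H$ nor $T_H$; comparing multiplicities on the two sides of the identity then shows it occurs with the same multiplicity in $T_G$. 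Running the symmetric argument with the roles of $G_k, H_l$ and of the two families interchanged yields $S_G = T_G$ and $S_H = T_H$ as multisets, whence $G_i \cong G_k$ and $H_j \cong H_l$. Finally, because each of $\mathcal{G}$ and $\mathcal{H}$ is a family of mutually cospectral, hence pairwise non-isomorphic, graphs, $G_i \cong G_k$ forces $i = k$ and $H_j \cong H_l$ forces $j = l$, so $(i,j) = (k,l)$. The delicate part is the multiset multiplicity bookkeeping: one must check that ``sharing no prime factor across the families'' is exactly what licenses the clean separation of the combined factorization into its $\mathcal{G}$- and $\mathcal{H}$-halves.
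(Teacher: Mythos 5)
Your proof is correct, and it is organized differently from the paper's. The paper proves non-isomorphism three times, once per condition: for Condition~1 it invokes unique prime factorization directly on the prime graphs; for Condition~2 it runs a separate cancellation argument, writing $G_a \cong \omega \square A$, $G_c \cong \omega \square C$ (with $A, C$ coprime), cancelling the common parts to get $A \square B \cong C \square D$, deducing $A \cong D$, and contradicting the gcd hypothesis; for Condition~3 it gives only a terse two-sentence argument. You instead reduce Conditions~1 and~2 to Condition~3 (both reductions are easy and correct: a common prime factor of two prime graphs forces $G_i \cong H_j$, contradicting the spectral hypothesis; and a common prime factor's vertex count would divide a gcd equal to $1$), and then prove Condition~3 once, carefully, via multiset bookkeeping on the unique prime factorizations of the connected graphs $F_{ij}$. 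Interestingly, the paper itself observes after its proof that Conditions~1 and~2 each imply Condition~3, but never exploits this in the proof --- you make that remark do the actual work. What your route buys: a single unified non-isomorphism argument with no case analysis on the index pairs (the paper splits into $a=c$, $b=d$, and the generic case), and a rigorous filling-in of the paper's weakest step, since the paper's Condition~3 justification ("their factorizations must be the same") is exactly the multiplicity-separation argument you spell out. What the paper's route buys: the Condition~2 argument is self-contained and extracts structural information ($A \cong D$, $B \cong C$) that identifies \emph{where} an isomorphism would have to come from, though at the cost of length and of implicitly relying on a coprimality-cancellation step of the same flavor as yours. Your additional observations --- that connectivity follows from connectivity of the factors, and that eigenvalue additivity under $\square$ holds for both adjacency and Laplacian matrices --- are correct and slightly more careful than the paper's one-line treatment of cospectrality.
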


\begin{proof}
Since all graphs in \(\mathcal{G}\) and \(\mathcal{H}\) are mutually cospectral, it follows that every $F_{ij}$ has the same spectrum. It remains to show that no two elements of $\mathcal{F}$ are isomorphic.
For distinct elements $F_{ab}, F_{cd} \in \mathcal{F}$, there are three cases. The first is when $a = c$, where we have $H_b \ncong H_d \Rightarrow F_{ab} \ncong F_{cd}$. The second is when $b = d$ whose proof of non-isomorphism is similar. To prove the third case, we assume otherwise i.e., $F_{ab} \cong F_{cd}$ and arrive at a contradiction.

For Condition 1, we know that $F_{ab}$ and $ F_{cd}$ are connected since their factors are, and since they have the same unique factorization, it follows that $ G_a \cong G_c \text{ or } G_a \cong H_d $. The first contradicts the mutual non-isomorphism in $\mathcal{G}$, and the second contradicts the fact that the graphs in $\mathcal{G}$ and those in $\mathcal{H}$ have different spectra. Thus, Condition 1 ensures $\mathcal{F}$ is a cospectral family.

For Condition 2, suppose $G_a$ and $G_c$ have a common factor graph $\omega$, for which $G_a \cong \omega \square A$ and $G_c \cong \omega \square C$ and $\omega$ is maximal in the sense that the graphs $A$ and $C$ are coprime. Note that $\omega$ may be trivial, but neither $A$ nor $C$ can be trivial as otherwise, $G_a \cong G_c$. Similarly, write $H_b \cong \varOmega \square B$ and $H_d \cong \varOmega  \square D$. Since $F_{ab} \cong F_{cd}$ by the assumption, we get
\begin{align*}
        (\omega \square A) \square (\varOmega \square B) &\cong (\omega \square C) \square (\varOmega \square D) \\
        (\omega \square \varOmega) \square (A \square B) &\cong (\omega \square \varOmega) \square (C \square D) \\
        A \square B &\cong C \square D.
\end{align*}
Since $A,C$ and $B,D$ are pairwise coprime, we get $A \cong D$ and $B \cong C$. This implies $|V(G_a)| = |V(\omega)||V(A)|$ and $|V(H_d)| = |V(\Omega)||V(D)|$ share the common factor $|V(A)|=|V(D)| > 1$, contradicting $\gcd(|V(G_a)|, |V(H_d)|) = 1$. Thus, Condition 2 ensures $\mathcal{F}$ is a cospectral family.

For Condition 3, if \(G_a\) and \(H_d\) share no prime factors, their factorizations must be the same as \(G_c\) and \(H_b\), respectively. Therefore, any isomorphism $F_{ab} \cong F_{cd}$ would require $G_a \cong G_c$ and $H_b \cong H_d$, which contradicts their definition.

Thus, \(\mathcal{F}\) is a cospectral family of size \(pq\) under any of the three conditions. 
\end{proof}

The above result gives a connected cospectral family of size $pq$. The next result demonstrates that upon relaxing the third condition, a smaller connected cospectral family, of size $p+q-1$ can be constructed.

\begin{theorem}
    \label{thm:loosen} 
    Let $\mathcal{G} = \{G_1, G_2, \ldots, G_p\}$ and $\mathcal{H} = \{H_1, H_2, \ldots, H_q\}$ be families of connected, mutually cospectral graphs such that the spectra of graphs in $\mathcal{G}$ differ from those in $\mathcal{H}$. If there exist $G \in \mathcal{G}$ and $H \in \mathcal{H}$ such that $G$ and $H$ are coprime, then the largest mutually non-isomorphic subset $\mathcal{F''} \subseteq \mathcal{F}$ has size at least $p + q - 1$.
\end{theorem}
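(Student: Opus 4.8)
The plan is to build an explicit subset $\mathcal{F}'' \subseteq \mathcal{F}$ of size $p+q-1$ and show its members are pairwise non-isomorphic; since the whole of $\mathcal{F}$ is cospectral (by the spectral part of Theorem~\ref{thm:main_const}), any such subset is automatically a cospectral family. Let $G_{i_0} \in \mathcal{G}$ and $H_{j_0} \in \mathcal{H}$ be the guaranteed coprime pair. I would take the ``cross'' through the index $(i_0, j_0)$, namely
\[
\mathcal{F}'' = \{\, G_i \square H_{j_0} : 1 \le i \le p \,\} \cup \{\, G_{i_0} \square H_j : 1 \le j \le q \,\}.
\]
The two listed sets meet only in $G_{i_0} \square H_{j_0}$, so $|\mathcal{F}''| = p + q - 1$, and it remains to verify mutual non-isomorphism across the three types of pairs. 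All members are connected, being products of connected graphs, so unique Cartesian prime factorization applies throughout.

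Two members of the ``row'' are $G_a \square H_{j_0}$ and $G_c \square H_{j_0}$ with $a \ne c$; by the cancellation law ($G_1 \square H \cong G_2 \square H$ if and only if $G_1 \cong G_2$) they are isomorphic only if $G_a \cong G_c$, which is impossible since $\mathcal{G}$ is mutually non-isomorphic. The same argument, applied to the fixed factor $G_{i_0}$, handles two ``column'' members $G_{i_0} \square H_b$ and $G_{i_0} \square H_d$. These two cases are routine.

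The main obstacle is the genuinely mixed case: a row member $G_a \square H_{j_0}$ with $a \ne i_0$ against a column member $G_{i_0} \square H_d$ with $d \ne j_0$. I would assume for contradiction that $G_a \square H_{j_0} \cong G_{i_0} \square H_d$. Writing $\pi(X)$ for the multiset of Cartesian prime factors of $X$, uniqueness of factorization gives $\pi(G_a) \uplus \pi(H_{j_0}) = \pi(G_{i_0}) \uplus \pi(H_d)$. Here the coprimality of $G_{i_0}$ and $H_{j_0}$ is decisive: for every prime $P \in \pi(G_{i_0})$ we have $P \notin \pi(H_{j_0})$, so comparing the multiplicity of $P$ on both sides yields $\operatorname{mult}_{G_a}(P) \ge \operatorname{mult}_{G_{i_0}}(P)$. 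Hence $G_{i_0}$ is a Cartesian factor of $G_a$, i.e.\ $G_a \cong G_{i_0} \square K$ for some $K$. Finally, because $G_a$ and $G_{i_0}$ both lie in the cospectral family $\mathcal{G}$ they have equal vertex counts, forcing $|V(K)| = 1$ and thus $G_a \cong G_{i_0}$, contradicting $a \ne i_0$. (Symmetrically, one may instead deduce $H_{j_0} \mid H_d$ and contradict $d \ne j_0$.) This shows all $p + q - 1$ graphs are pairwise non-isomorphic, so the largest mutually non-isomorphic subset of $\mathcal{F}$ has size at least $p + q - 1$.
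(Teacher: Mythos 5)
Your proof is correct and takes essentially the same route as the paper: the identical ``cross'' construction $\{G_i \square H_{j_0}\} \cup \{G_{i_0} \square H_j\}$, with row and column pairs dispatched by the Cartesian cancellation law. The only difference is that where the paper simply asserts $F_{1b} \ncong F_{c1}$ from the coprimality of $G_1$ and $H_1$, you supply the missing justification (comparing prime-factor multiplicities in the unique factorization, then using equal vertex counts from cospectrality to force the cofactor to be trivial), which fills in a step the paper leaves to the reader.
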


\begin{proof}
    Without loss of generality, suppose that $p \ge q$, and that $G_1$ and $H_1$ are coprime. The graph family  $\{F_{11}, F_{21}, \ldots, F_{p1}\}$ is a set of mutually cospectral connected graphs because if not, then for some $b \ne d$ we have $F_{b1} \cong F_{d1} \Rightarrow G_b \cong G_d$. Since $G_1$ and $H_1$ are coprime, we have $F_{1b} \ncong F_{c1}$ for any nontrivial choice of $b$ and $c$. Therefore, we can extend the cospectral family to 
    \[
    \mathcal{F''} = \{F_{11}, F_{21}, \ldots, F_{p1}, F_{12}, \ldots, F_{1q}\},
    \]
    and the size of this family is $p + q - 1$.
\end{proof}

We will count the number of new triplets of cospectral graphs from the above construction.

\begin{corollary}
    The construction defined in Theorem~\ref{thm:loosen} has at least
    \[
    (p-1)(q-1) + (p-1)\binom{q}{3} + (q-1)\binom{p}{3} + \binom{p+q-1}{3}
    \]
    newly generated connected cospectral triplets.
\end{corollary}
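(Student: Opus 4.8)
The plan is to recast the count as the number of $3$-element subsets of the product family $\mathcal{F} = \{F_{ij} = G_i \square H_j\}$ that are pairwise non-isomorphic. Cospectrality comes for free here, since every $F_{ij}$ carries the same (sumset) spectrum, which moreover differs from the spectra of $\mathcal{G}$ and $\mathcal{H}$; hence every such triple consists of product graphs and is genuinely new. So the whole problem reduces to counting triangles in the ``provable non-isomorphism'' graph $\Gamma$ whose vertices are the grid positions $(i,j)$ and whose edges are the pairs we can certify distinct. There are exactly three sources of edges: $F_{ij} \ncong F_{ij'}$ for $j \neq j'$ (same $G_i$, by the cancellation law $G \square X \cong G \square Y \iff X \cong Y$), $F_{ij} \ncong F_{i'j}$ for $i \neq i'$ (same $H_j$), and the ``cross'' pairs $F_{1j} \ncong F_{i1}$ supplied by the coprimality of $G_1, H_1$ as established in Theorem~\ref{thm:loosen}.

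Next I would exhibit four pairwise-disjoint families of triangles in $\Gamma$ whose sizes are the four summands; since the statement claims only a lower bound, it suffices that these be valid and distinct, and I need not show they exhaust $\Gamma$. First, the cross $\mathcal{F''} = \{F_{1j}\}_j \cup \{F_{i1}\}_i$ is a clique of order $p+q-1$ by Theorem~\ref{thm:loosen}, contributing $\binom{p+q-1}{3}$ triples. Second, for each $i \in \{2,\dots,p\}$ the full row $\{F_{i1},\dots,F_{iq}\}$ is a clique of order $q$ (same-$G$ edges), giving $(p-1)\binom{q}{3}$ triples; row $1$ is excluded precisely because it already sits inside $\mathcal{F''}$. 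Third, symmetrically, the columns $j \in \{2,\dots,q\}$ give $(q-1)\binom{p}{3}$ triples. Fourth, for each pair $(i,j)$ with $i,j \geq 2$ the ``L-shaped'' triple $\{F_{i1}, F_{1j}, F_{ij}\}$ is a triangle: $F_{ij}\!-\!F_{i1}$ is a same-$G$ edge, $F_{ij}\!-\!F_{1j}$ a same-$H$ edge, and $F_{i1}\!-\!F_{1j}$ a cross edge, yielding the final $(p-1)(q-1)$ triples.

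The main obstacle is the bookkeeping of disjointness so that the four counts genuinely add. The cleanest invariant is the ``corner'': a triple lies in $\mathcal{F''}$ exactly when all three members lie in row $1$ or column $1$, lies in a single non-anchor row (resp.\ column) exactly when its members share a first (resp.\ second) index $\geq 2$, and is an L-triangle exactly when it contains a unique member $F_{ij}$ with $i,j \geq 2$ together with its ``shadows'' $F_{i1}$ and $F_{1j}$. I would check these descriptions are mutually exclusive: an L-triangle is barred from $\mathcal{F''}$ by its corner $F_{ij}$, from any single row by $F_{1j}$, and from any single column by $F_{i1}$, while the corner $(i,j)$ also labels it uniquely, so distinct pairs give distinct triples. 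Summing the four disjoint families then yields the stated lower bound. The one genuinely non-routine step is recognizing the $(p-1)(q-1)$ L-shaped triangles at all---they are the only triangles forced to use a cross edge yet not contained in $\mathcal{F''}$---and confirming that their single cross edge is licensed by the coprimality of $G_1$ and $H_1$.
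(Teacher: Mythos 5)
Your proposal is correct and takes essentially the same route as the paper: the four triangle families you exhibit (the clique on $\mathcal{F}''$, the rows $i\ge 2$, the columns $j\ge 2$, and the L-shaped triples $\{F_{i1},F_{1j},F_{ij}\}$) are exactly the paper's four contributions, justified by the same three ingredients (cancellation in each coordinate plus the one coprime cross pair from Theorem~\ref{thm:loosen}). The only difference is presentational: you make the pairwise-disjointness of the four families explicit via the ``corner'' invariant, a point the paper's proof passes over silently.
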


\begin{proof}
    Consider the subset $\mathcal{F}'' \subseteq \mathcal{F}$ of size $p + q - 1$ as defined in Theorem~\ref{thm:loosen}. First, for each pair $F_{a1}, F_{1b} \in \mathcal{F}''$ with $a \ne 1$ and $b \ne 1$, the third graph $F_{ab}$ lies in $\mathcal{F} \setminus \mathcal{F}''$. This has $(p - 1)(q - 1)$ triplets of the form $\{F_{a1}, F_{1b}, F_{ab}\}$. Second, all 3-element subsets of $\mathcal{F}''$ form connected cospectral triplets, contributing $\binom{p + q - 1}{3}$ additional triplets. Third, for each $j > 1$, the set $\{F_{1j}, F_{2j}, \ldots, F_{pj}\}$ is a family of $p$ mutually cospectral graphs. From each such family (of which there are $q - 1$), we can choose $\binom{p}{3}$ distinct triplets. This gives an additional $(q - 1)\binom{p}{3}$ triplets. Similarly, for each $i > 1$, the set $\{F_{i1}, F_{i2}, \ldots, F_{iq}\}$ contributes $\binom{q}{3}$ triplets. With $(p - 1)$ such families, this accounts for $(p - 1)\binom{q}{3}$ more.
    Summing all contributions yields the total number of newly generated connected cospectral triplets.
\end{proof}

A simpler lower bound of $\max(p,q)$ is obtained on dropping all the three conditions. To see this, observe that one can choose the larger of $\{F_{11}, F_{12}, \ldots, F_{1q}\}$ and $\{F_{11}, F_{21}, \ldots, F_{p1}\}$ to set the lower bound.

Note that Condition~3 is the most general among the three. If Condition~3 fails, then both Conditions~1 and~2 necessarily fail as well. Conversely, the truth of either Condition~1 or Condition~2 implies the truth of Condition~3. The following theorem establishes the existence of cospectral families of a specified size on a given number of vertices.

\begin{theorem}
    \label{thm:exist_family}
    If there exists a connected cospectral family of \( p \) mutually coprime graphs on \( n \) vertices, then for every integer \( k \ge 1 \), there exists a connected cospectral family of size \( \binom{k + p - 1}{k} \) on \( n^k \) vertices.
\end{theorem}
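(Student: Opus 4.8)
The plan is to build the family from all $k$-fold Cartesian products of graphs taken from $\mathcal{G} = \{G_1, \dots, G_p\}$ with repetition permitted. Index the members by the multiplicity vectors $(m_1, \dots, m_p)$ of nonnegative integers with $\sum_{i=1}^{p} m_i = k$; to such a vector associate the graph $\Pi$ obtained by taking the Cartesian product of $m_i$ copies of $G_i$ for each $i$. The number of such vectors is the number of weak compositions of $k$ into $p$ parts, namely $\binom{k+p-1}{k}$, matching the claimed family size. Each $\Pi$ has $\prod_{i} |V(G_i)|^{m_i} = n^{k}$ vertices since every $G_i$ has $n$ vertices, and each $\Pi$ is connected because (by the stated characterization, extended inductively over the $k$ factors) a Cartesian product is connected exactly when all of its factors are.

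Cospectrality is immediate from the eigenvalue rule for the Cartesian product quoted earlier: the spectrum of $G \square H$ is the collection of all sums $\lambda + \mu$ with $\lambda$ an eigenvalue of $G$ and $\mu$ an eigenvalue of $H$, multiplicities added. As all $G_i$ carry one common spectrum $S$, the spectrum of any of these $k$-fold products is the multiset of all sums $\mu_1 + \cdots + \mu_k$ with each $\mu_t \in S$, which is independent of the chosen multiplicity vector. Hence the $\binom{k+p-1}{k}$ products are mutually cospectral.

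The crux is to show that distinct multiplicity vectors give non-isomorphic graphs, and this is where I would use Sabidussi's unique prime factorization together with the mutual coprimality hypothesis. The case $n = 1$ forces $p = 1$ and the conclusion is trivial, so assume $n \ge 2$; then each $G_i$ has at least one prime factor on at least two vertices. Write the prime factorization of $G_i$ with exponents $e_{i,P}$ recording the multiplicity of a prime $P$ in $G_i$. Mutual coprimality says that the prime supports of distinct $G_i$ are disjoint, so for each prime $P$ at most one index $i$ has $e_{i,P} > 0$. The prime factorization of the product $\Pi$ attached to $(m_1, \dots, m_p)$ therefore assigns to each prime $P$ of $G_i$ the multiplicity $m_i\, e_{i,P}$. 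Fixing one prime $P_i$ for each $G_i$, the positive constant $e_{i,P_i}$ lets us recover $m_i$ from $\Pi$ as the multiplicity of $P_i$ in $\Pi$ divided by $e_{i,P_i}$. Thus the whole vector $(m_1,\dots,m_p)$ is an isomorphism invariant of $\Pi$: if two products were isomorphic, uniqueness of factorization would force the same prime factorization, hence the same multiplicity vector. The anticipated obstacle is exactly this bookkeeping step, which becomes routine once coprimality is recast as disjointness of prime supports; the remaining parts of the argument then deliver a genuinely $\binom{k+p-1}{k}$-element connected cospectral family on $n^k$ vertices.
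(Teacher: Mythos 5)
Your proof is correct and follows essentially the same route as the paper's: index the family by multiplicity vectors $(m_1,\dots,m_p)$ with $\sum_i m_i = k$, count them by stars and bars as $\binom{k+p-1}{k}$, and use Sabidussi's unique prime factorization together with coprimality to get non-isomorphism. In fact, your treatment of the final step---recasting coprimality as disjointness of prime supports and recovering each $m_i$ from the prime multiplicities of the product---is spelled out in more detail than in the paper, which merely asserts that coprimality makes each factorization yield a unique graph.
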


\begin{proof}
    Let \( \mathcal{U} = \{U_1, U_2, \ldots, U_p\} \) be such a family, where each \( U_i \) is connected, mutually cospectral, and pairwise coprime, and let \( |V(U_i)| = n \). The number of unique Cartesian prime factorizations of length $k$ can be seen as the distribution problem 
    $$k = e_1 + e_2 + \cdots + e_p,$$ 
    where each $e_i \ge 0$ denotes the multiplicity of $U_i$ in the factorization. The number of such factorizations is \( \binom{k + p - 1}{k} \) \cite{Erickson}, and the required family of graphs is the one obtained by including all such factorizations. 
    
    Each graph in this collection is connected and has \( n^k \) vertices, since the Cartesian product of graphs multiplies their vertex counts. Moreover, the mutual coprimality of the graphs in \( \mathcal{U} \) guarantees that each factorization corresponds to a unique (up to isomorphism) graph. Therefore, the constructed family consists of \( \binom{k + p - 1}{k} \) mutually non-isomorphic, connected, cospectral graphs on \( n^k \) vertices.
\end{proof}

A family of connected, cospectral, Cartesian prime graphs automatically satisfies the condition.

\section{Conclusion}

Compared to the Godsil--McKay (GM) switching method, which involves an NP-hard condition verification step (i.e., checking for the existence of a suitable partition) and provides no guarantee of non-isomorphism among the resulting cospectral graphs. The proposed method offers several advantages. Specifically, it involves two efficiently verifiable conditions (Conditions~1 and~2), and one condition (Condition~3) that can be verified using a quasi-polynomial time algorithm for graph isomorphism~\cite{Babai}. Furthermore, our method guarantees that the resulting graphs are mutually non-isomorphic. For a linear-time algorithm to compute the Cartesian prime factorization of a graph, see~\cite{ImrichPeterin}.

The main drawbacks of the proposed method are the rapid increase in graph size, the rarity of connected cospectral graph families of a given size, and the need to compare the prime factorizations of candidate graphs.

GM switching has the advantage of starting with any regular graph and modifying it through vertex additions to avoid the complexity of computing an appropriate partition. A similar idea can be applied to our method: given a family \( \mathcal{G} = \{G_1, G_2, \ldots, G_p\} \) of connected, mutually cospectral graphs, one may choose a singleton family \( \mathcal{H} = \{H_1\} \) such that \( H_1 \) is connected and \( \gcd(|V(G_1)|, |V(H_1)|) = 1 \). The limitation, however, lies in the relative rarity of such families of size \( p \), especially when compared to the abundance of regular graphs available for GM switching.

\subsection*{Acknowledgments}
This work was partially supported by the Department of Science and Technology (Government of India) under the SERB Project (project number SRG/2022/002219) and DST INSPIRE program (grant number Inspire 16/2020).

\end{document}